\def\dfrac{\displaystyle\frac}
\newtheorem{thm}{Theorem}[section]%
\newtheorem{lem}{Lemma}[section]%
\begin{document}
\title{Fast enumeration of effective mixed transports \\ for recommending shipper collaboration}
\author{Akifumi~Kira $^{\tt a, *}$, Nobuo Terajima $^{\tt b}$ \\
}
\date{\empty}
\maketitle
\vspace{-10mm}
\begin{center}
\small
\begin{tabular}{l}
$^{\tt a}$ Institute of Mathematics for Industry, Kyushu University \\
\quad 744 Motooka, Nishi-ku, Fukuoka 819-0395, Japan \\
$^{\tt b}$ Japan Pallet Rental Corporation \\
\quad Ote Center Building, 1-1-3 Otemachi, Chiyoda-ku, Tokyo 100-0004, Japan \\
$^{\tt *}$ Corresponding author, E-mail: kira@imi.kyushu-u.ac.jp
\end{tabular}
\end{center}
\begin{abstract}
In this study, we focus on a form of joint transportation called mixed transportation 
and enumerate the combinations with high cooperation effects 
from among a number of transport lanes registered in a database (logistics big data). 
As a measure of the efficiency of mixed transportation, 
we consider the reduction rate that represents how much the total distance of loading trips is shortened by cooperation. The proposed algorithm instantly presents the set of all mixed transports 
with a reduction rate of a specified value or less.  
This algorithm 
is more than 7,000 times faster than simple brute force.
\end{abstract}
\textbf{Keywords} 
Enumeration, joint transport, mixed transport, 
pruning, cooperative game, social implementation.

\section{Introduction}
While there has been concern about a logistics crisis and the shortage of drivers has been becoming increasingly serious, improving labor productivity in logistics has become an urgent issue. However, truck load efficiency remains low at less than 40\%~\cite{policy-b}.
In the General Principles of Logistics Policies~\cite{policy-a}, which provides the government's guidelines on logistics and logistics administration, improving the efficiency of logistics through collaboration is set as a goal, and there is strong demand for us to shift from specific optimization to total optimization or from competition to co-creation. 
\par
Kira et al.~\cite{Kira2023} study a form of joint transportation called triangular transportation, 
wherein three shipments are processed sequentially in the form of a triangle,  
and propose successful algorithms that enumerate the combinations with high cooperation effects.
The joint transportation matching system \lq\lq TranOpt'' provided by the Japan Pallet Rental Corporation (JPR) is a service that uses this patented technology to match cargo owner companies looking for joint transportation partners with one another and propose efficient joint transportation~\cite{JPR2021, Kira2021}
\footnote{Patent rights were jointly acquired by Gunma University and JPR~\cite{Kira2021}.}.  
This service is already used by more than 260 companies (as of November 2024)
\footnote{The authors won the Case Study Award 2023 of the Operations Research Society of Japan for this research.}. 
\par
Although Kira et al.~\cite{Kira2023} only deal with triangular transportation, TranOpt has 
another important function that supports the high-speed enumeration of mixed
transportation, in which loads are mixed and transported simultaneously (see Figure~\ref{fig:mixed_transport}). 
As a measure of the efficiency of mixed transportation, 
the reduction rate, which represents how much the total distance of loading trips is shortened by cooperation, 
is taken into account. 
In this study, we explain the logic installed in TranOpt in detail and prove its correctness.


\begin{figure}[H]
\centering
\begin{tabular}{l}
$\longrightarrow$ : Loading Trip \\
$\dashrightarrow$\, : Empty Trip
\end{tabular}
\hspace{5mm}
$\textrm{Reduction Ratio} = \dfrac{\textrm{Length of} \longrightarrow \textrm{when cooperating}}{\textrm{Length of} \longrightarrow \textrm{when not cooperating}}$ \\[3mm]
\includegraphics[scale=0.14]{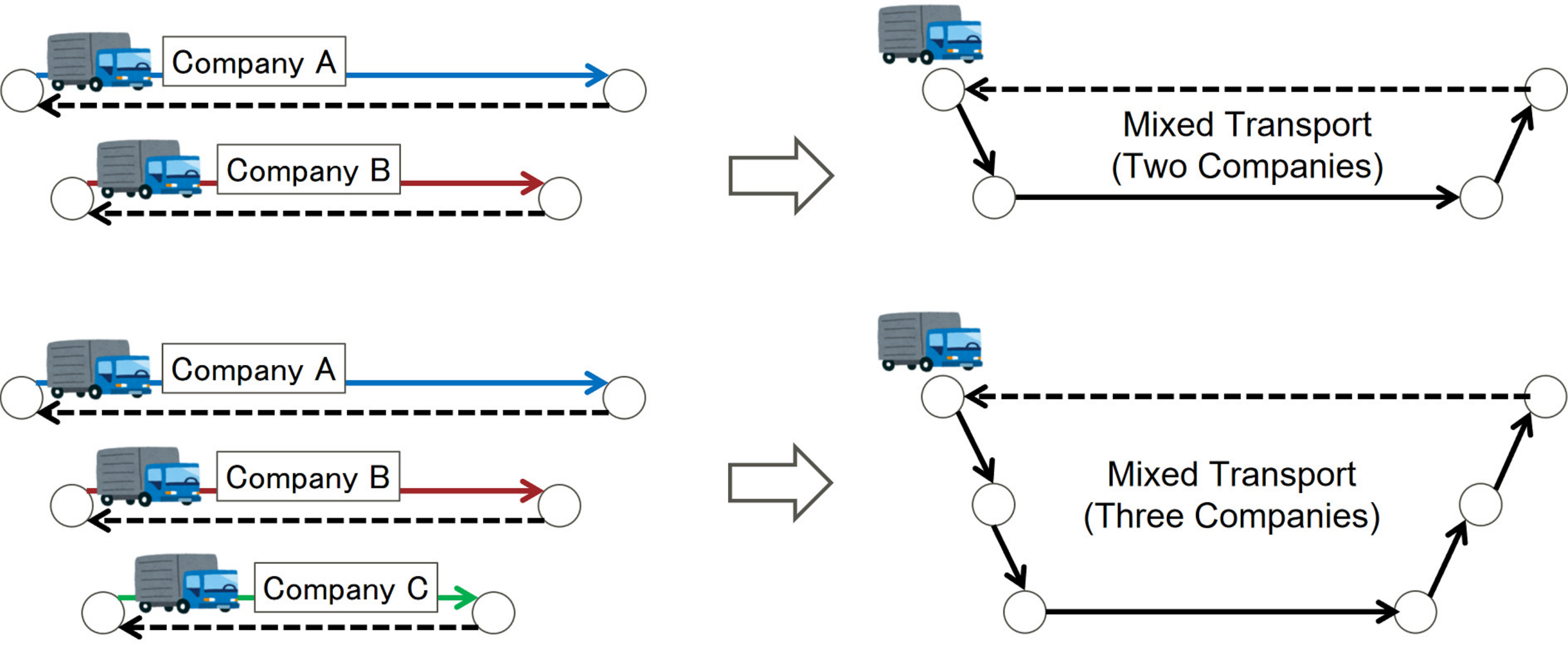}
\caption{Mixed transportation}
\label{fig:mixed_transport}
\end{figure}
%
\section{Problem Formulation} 
\label{Sec:formulation}
%

We are given a metric space $(B, d)$,  
where $B$ is a finite set of transportation bases and 
$d : B \!\times\! B \to \mathbb{R}_{\geq 0}$ is a distance function. 
For simplicity, a truckload request from one base to another is called a transport lane (or lane),   
and a finite set of lanes $T$ is provided. 
For every lane $t \in T$, we denote the start point and the endpoint as $t^{\rm s}$ and $t^{\rm e}$, respectively.
Furthermore, for every lane $t \in T$, we denote the distance of $t$ by $t^{\rm d}$, where $t^{\rm d} = d(t^{\rm s}, t^{\rm e})$.
\par
A series of transportation in which a single truck mixes the loads of multiple lanes 
$t_1, t_2,\, \ldots, t_n$ and transports them simultaneously is called a \lq\lq mixed transport'' and 
denoted by $(t_1, t_2,\,\ldots, t_n)$. 
This notation implies that loads are loaded in the order $t_1, t_2,\, \ldots t_n$ and 
unloaded in the reverse order $t_n,\, \ldots, t_2, t_1$. 
This is due to the last-in, first-out (LIFO) nature of truck beds. 
This study focuses on the case $n = 3$. 
Given a mixed transport $(t_1, t_2, t_3)$, we use symbols $d_i, x_i, x, z_i$, and $z$ in the following manner 
(see also Figure~\ref{fig:d_x_and_z}):
\begin{align*}
d_i &= t_i^{\rm d}, \quad i = 1,2,3,  \\
x_1 &= d(t_{1}^{\rm s}, t_{2}^{\rm s}), \quad x_2 = d(t_{2}^{\rm s}, t_{3}^{\rm s}), \\
z_1 &= d(t_{2}^{\rm e}, t_{1}^{\rm e}), \quad z_2 = d(t_{3}^{\rm e}, t_{2}^{\rm e}), \\
x &= d(t_{1}^{\rm s}, t_{3}^{\rm s}), \quad y = d(t_{3}^{\rm s}, t_{1}^{\rm e}), \quad z = d(t_{3}^{\rm e}, t_{1}^{\rm e}), \\
\end{align*}

\begin{figure}[H]
\centering
\includegraphics[scale=0.11]{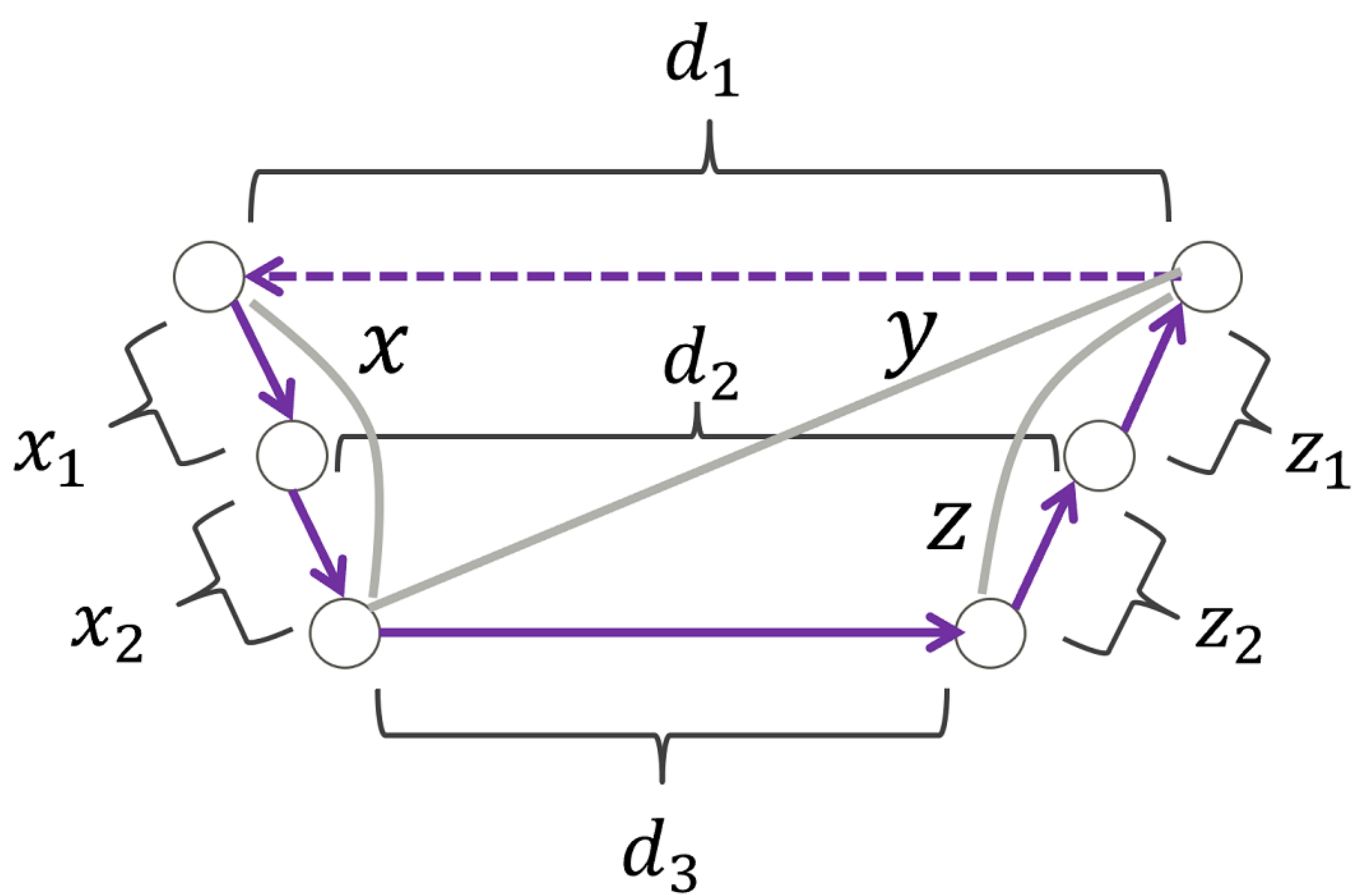} 
\caption{Symbols $d_i, x_i, z_i, x, y, z$ representing distances}
\label{fig:d_x_and_z}
\end{figure}

As a measure of the efficiency of mixed transportation, 
it is natural to consider the following reduction rate:
\begin{equation*}
\textrm{Reduction rate} = \frac{x_1 + x_2 + d_3 + z_2 + z_1}{d_1 + d_2 + d_3}.
\end{equation*}
The reduction rate represents how much the total distance of loading trips is shortened by cooperation; 
the smaller this value is, the more efficient the mixed transport is. 
We note that 
the reduction rate is at its minimum when $t_1^{\rm s} = t_2^{\rm s} = t_3^{\rm s}$ and $t_1^{\rm e} = t_2^{\rm e} = t_3^{\rm e}$, 
and in that case the minimum value is $\frac{1}{3}$.

\par
For a given lane $t_1 \in T$, we search for joint transport partners $t_2, t_3 \in T$ 
and propose an effective mixed transport $(t_1, t_2, t_3)$ (see Figure 3). 
In particular, for any specified $r \in [\frac{1}{3}, 1)$, our goal is to list all the mixed transports with a reduction rate of $r$ or less. 

\begin{figure}[H]
\centering
\includegraphics[scale=0.15]{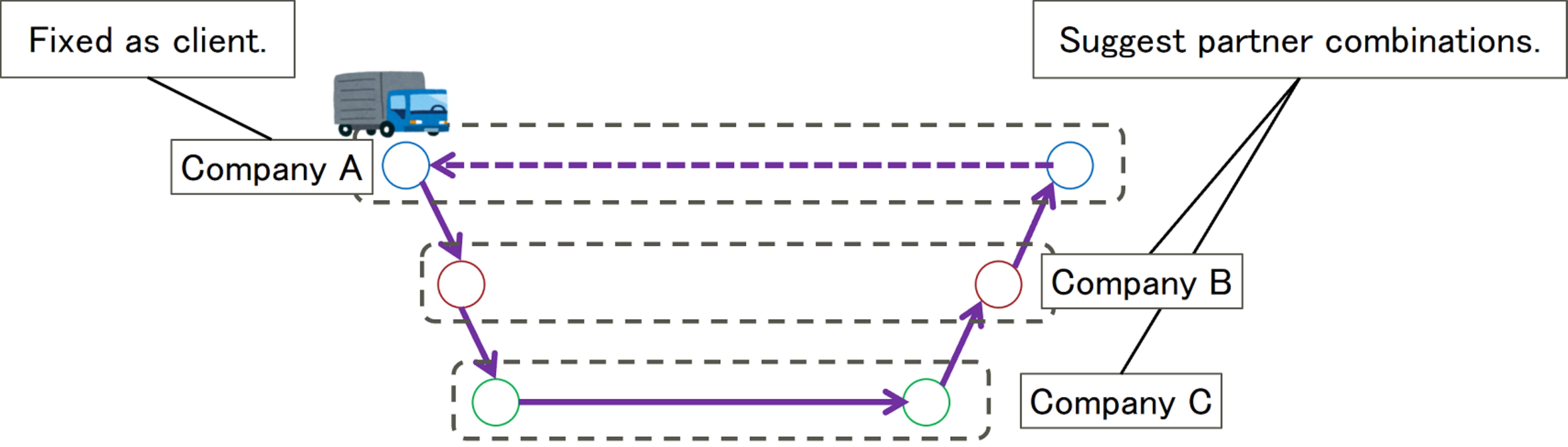} 
\caption{Processing a mixed transport matching request}
\label{fig:system}
\end{figure}

For a given transport lane $t_1$, we only need to search for partners $t_2$ and $t_3$. 
Therefore we can consider a simple brute-force search with a double for-loop, 
as shown in Algorithm~\ref{Alg:simple}. 

\begin{algorithm2e}[h]
\SetKwData{Input}{input}
\KwData{a set of lanes $T$ on a metric space $(B, d)$, a lane $t_1$, and a desired reduction rate $r \in [\frac{1}{3}, 1)$}
\KwResult{the set of all mixed transports starting from $t_1^{\rm s}$ with a reduction rate of $r$ or less}

$C \leftarrow \emptyset$\;
$d_1 \leftarrow t_1^{\rm d}$\;
\ForAll{$t_2 \in T \setminus \{t_1\}$}{
    $d_2 \leftarrow t_2^{\rm d}$\;
    $x_1 \leftarrow d(t_1^{\rm s}, t_2^{\rm s})$\;
    $z_1 \leftarrow d(t_2^{\rm e}, t_1^{\rm e})$\;
    \ForAll{$t_3 \in T \setminus \{t_1, t_2\}$}{
        $d_3 \leftarrow t_3^{\rm d}$\;
        $x_2 \leftarrow d(t_2^{\rm s}, t_3^{\rm s})$\;
        $z_2 \leftarrow d(t_3^{\rm e}, t_2^{\rm e})$\;
        \If{$\frac{x_1 + x_2 + d_3 + z_2 + z_1}{d_1 + d_2 + d_3} \leqq r$}{
            $C \leftarrow C \cup \{(t_1, t_2, t_3)\}$\;
        }
    }
}
\caption{A simple brute-force search}
\label{Alg:simple}
\end{algorithm2e}
%
\section{Pruning Algorithm} 
\label{Sec:pruning}
%
For any transportation base $b \in B$, let $T(b)$ denote the set of all lanes 
starting from $b$. In addition, let $S$ be the set of all bases that can be the starting point for a lane. 
That is,
\begin{equation*}
S := \{ b \in B \,|\, T(b) \neq \emptyset \}.
\end{equation*}

Our proposed algorithm is presented as in Algorithm~\ref{Alg:pruning}.

\begin{algorithm2e}[H]
\SetKwData{Input}{input}
\KwData{a set of lanes $T$ on a metric space $(B, d)$, a lane $t_1$, and a desired reduction rate $r \in [\frac{1}{3}, 1)$}
\KwResult{the set of all mixed transports starting from $t_1^{\rm s}$ with a reduction rate of $r$ or less}

$C \leftarrow \emptyset$\;
$d_1 \leftarrow t_1^{\rm d}$\;
\ForAll{$s \in S$ such that $d(t_1^{\rm s}, s) + d(s, t_1^{\rm e}) \leqq \frac{2r}{1-r}d_1$}{
    $x \leftarrow d(t_1^{\rm s}, s)$\;
    \ForAll{$t_3 \in T(s) \setminus \{t_1\}$ such that $(1 - 2r) t_3^{\rm d} + (1 - r) d(t_3^{\rm e}, t_1^{\rm e}) \leqq (r - 1) x + r d_1$}{
        $d_3 \leftarrow t_3^{\rm d}$\;
        $z \leftarrow d(t_3^{\rm e}, t_1^{\rm e})$\;
        \ForAll{$s' \in S$ such that $d(t_1^{\rm s}, s') + (1 - r) d(s', t_3^{\rm s}) \leqq rd_1 + (2r - 1) d_3 - (1 - r) z$}{
            $x_1 \leftarrow d(t_1^{\rm s}, s')$\;
            $x_2 \leftarrow d(s', t_3^{\rm s})$\;
            \ForAll{$t_2 \in T(s') \setminus \{t_1, t_3\}$ such that $t_2^{\rm d} \geqq \frac{1}{r}(x_1 + x_2 + z) - d_1 + \frac{1 - r}{r} d_3$}{
                $d_2 \leftarrow t_2^{\rm d}$\;
                $z_2 \leftarrow d(t_3^{\rm e}, t_2^{\rm e})$\;
                $z_1 \leftarrow d(t_2^{\rm e}, t_1^{\rm e})$\;
                \If{$\frac{x_1 + x_2 + d_3 + z_2 + z_1}{d_1 + d_2 + d_3} \leqq r$}{
                    $C \leftarrow C \cup \{(t_1, t_2, t_3)\}$\;
                }
            }
        }
    }
}
\caption{A brute-force search with pruning}
\label{Alg:pruning}
\end{algorithm2e}

%

In order to prove the validity of Algorithm~\ref{Alg:pruning}, we introduce four lemmas.

\begin{lem}[pruning for the first for-loop]
A necessary condition for a mixed transport to 
have an reduction rate of $r$ or less is
\begin{equation*}
x +y \leqq \frac{2r}{1-r}d_1.
\end{equation*}
\label{Lem:pruning1}
\end{lem}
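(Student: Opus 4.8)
The plan is to start from the defining inequality of the reduction-rate condition and whittle its right-hand side down to a multiple of $d_1$ alone by repeatedly invoking the triangle inequality in $(B,d)$. Concretely, the assumption that the mixed transport $(t_1,t_2,t_3)$ has reduction rate at most $r$ means
$$x_1 + x_2 + d_3 + z_2 + z_1 \leq r(d_1 + d_2 + d_3).$$
The target inequality involves only $x$, $y$, and $d_1$, so the two quantities I must eliminate are $d_2$ and $d_3$, and at the same time I must reconstruct $x$ and $y$ from the loading-trip quantities $x_1, x_2, d_3, z_2, z_1$.

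First I would bound $x$ and $y$ from above by the relevant subpaths of the loading trip. Since $t_1^{\rm s} \to t_2^{\rm s} \to t_3^{\rm s}$ is a walk from $t_1^{\rm s}$ to $t_3^{\rm s}$, the triangle inequality gives $x \leq x_1 + x_2$; and since $t_3^{\rm s} \to t_3^{\rm e} \to t_2^{\rm e} \to t_1^{\rm e}$ is a walk from $t_3^{\rm s}$ to $t_1^{\rm e}$, it gives $y \leq d_3 + z_2 + z_1$. Adding these, $x + y \leq x_1 + x_2 + d_3 + z_2 + z_1$; that is, $x+y$ is at most the numerator of the reduction rate.

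The key step is to eliminate $d_2$. Here I would use the detour bound $d_2 = d(t_2^{\rm s},t_2^{\rm e}) \leq d(t_2^{\rm s},t_1^{\rm s}) + d(t_1^{\rm s},t_1^{\rm e}) + d(t_1^{\rm e},t_2^{\rm e}) = x_1 + d_1 + z_1$, which routes $t_2$ through both endpoints of $t_1$ and is exactly what converts an $r d_2$ term into $r d_1$ plus slack. Substituting this into the reduction-rate inequality and cancelling the common $r d_3$ against the explicit $d_3$ yields, after rearranging,
$$(1-r)x_1 + x_2 + (1-r)d_3 + z_2 + (1-r)z_1 \leq 2r\, d_1.$$
Finally, since $r \in [\tfrac13,1)$ forces $0 < 1-r \leq 1$, I can discount the coefficients on $x_2$ and $z_2$ down to $1-r$, which only decreases the left-hand side while preserving the bound, obtaining $(1-r)(x_1 + x_2 + d_3 + z_2 + z_1) \leq 2r\,d_1$; combining this with $x+y \leq x_1 + x_2 + d_3 + z_2 + z_1$ and dividing by $1-r>0$ gives the claim.

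I expect the main obstacle to be spotting the correct triangle inequality for $d_2$: the more obvious bounds (for instance routing $t_2$ through $t_3$) keep $d_3$ or the start/endpoints of $t_3$ in play, whereas the target contains only $d_1$, so one must route $t_2$ specifically through $t_1^{\rm s}$ and $t_1^{\rm e}$. A minor point worth verifying is that the final coefficient manipulation genuinely relies on $1-r \leq 1$ (equivalently $r \geq 0$), which holds throughout the admissible range of $r$. As a sanity check, the fully degenerate configuration with coincident starts and coincident ends gives $x + y = d_1$ and reduction rate $\tfrac13$, so the bound $\frac{2r}{1-r}d_1 = d_1$ is attained with equality, confirming that the estimate cannot be improved.
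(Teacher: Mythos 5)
Your proof is correct and is essentially the paper's own argument in additive form: the key detour bound $d_2 \leqq x_1 + d_1 + z_1$, the triangle bounds $x \leqq x_1 + x_2$ and $y \leqq d_3 + z_2 + z_1$, and your coefficient-discounting step via $1-r \leqq 1$ correspond exactly to the paper's chain of fraction bounds ending in $r \geqq \frac{x+y}{x+y+2d_1}$, just with denominators cleared. The sanity check at $r = \frac{1}{3}$ confirming tightness is a nice addition but does not change the substance.
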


\begin{proof}
We can find a lower bound of the reduction rate as follows:
\begin{align*}
r &\geqq \frac{x_1 + x_2 + d_3 + z_2 + z_1}{d_1 + d_2 + d_3}   \quad \textrm{(monotonically decreasing w.r.t. $d_2$)} \\
&\geqq \frac{x_1 + x_2 + d_3 + z_2 + z_1}{d_1 + (x_1 + d_1 + z_1) + d_3}  
= \frac{x_1 + x_2 + d_3 + z_2 + z_1}{x_1 + 2d_1 + d_3 + z_1} \\
&\geqq \frac{x_1 + x_2 + d_3 + z_2 + z_1}{x_1 + x_2 + 2d_1 + d_3 + z_2 + z_1} \quad \textrm{(monotonically increasing w.r.t. $x_1 + x_2, d_3 + z_2 + z_1$)} \\
&\geqq \frac{x + y}{x + y + 2d_1}.
\end{align*}
In the above, to obtain the second inequality, we replace $d_2$ with $x_1 + d_1 + z_1$, 
which is an upper bound of $d_2$. Similarly, to obtain the last inequality, we replace
$x_1 + x_2$ and $d_3 + z_2 + z_1$ with their respective lower bounds $x$ and $y$.
These upper and lower bounds follow from the metric axioms. 
Thus, by solving $r \geqq \frac{x + y}{x + y +2d_1}$ with respect to $x + y$, we obtain the result.
\end{proof}

\begin{lem}[pruning for the second for-loop]
A necessary condition for a mixed transport to 
have an reduction rate of $r$ or less is
\begin{equation*}
(1 - 2r) d_3 + (1 - r) z \leqq (r - 1) x + r d_1.
\end{equation*}
\label{Lem:pruning2}
\end{lem}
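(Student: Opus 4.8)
The plan is to mimic the structure of the proof of Lemma~\ref{Lem:pruning1}: starting from the hypothesis $r \geqq \frac{x_1 + x_2 + d_3 + z_2 + z_1}{d_1 + d_2 + d_3}$, I would derive a lower bound for the reduction rate that depends only on $x, z, d_1, d_3$, and then solve the resulting inequality for the claimed relation. Rearranging the target inequality first shows that it is equivalent to $r \geqq \frac{x + d_3 + z}{d_1 + 2d_3 + x + z}$, so it suffices to prove that the reduction rate is at least this quantity.

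First I would eliminate $d_2$ from the denominator. Since the fraction is monotonically decreasing in $d_2$, I replace $d_2$ by the upper bound $d_2 \leqq x_2 + d_3 + z_2$, which follows from the triangle inequality along the path $t_2^{\rm s} \to t_3^{\rm s} \to t_3^{\rm e} \to t_2^{\rm e}$; this is the analogue of the bound $d_2 \leqq x_1 + d_1 + z_1$ used in Lemma~\ref{Lem:pruning1}, but routed through $t_3$ rather than through $t_1$. The denominator then becomes $d_1 + x_2 + 2d_3 + z_2$, which is at most the numerator plus $d_1 + d_3$ because the discarded terms $x_1 + z_1$ are nonnegative; enlarging the denominator to $(x_1 + x_2 + d_3 + z_2 + z_1) + d_1 + d_3$ only decreases the fraction.

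At this stage the fraction has the form $\frac{N}{N + (d_1 + d_3)}$ with $N = x_1 + x_2 + d_3 + z_2 + z_1$, which is monotonically increasing in $N$. I then lower bound $N$ using the metric axioms $x_1 + x_2 \geqq x$ and $z_1 + z_2 \geqq z$, obtaining $N \geqq x + d_3 + z$ and hence $r \geqq \frac{x + d_3 + z}{d_1 + 2d_3 + x + z}$. Clearing denominators and collecting the terms in $r$ then yields $(1-2r)d_3 + (1-r)z \leqq (r-1)x + r d_1$, which is exactly the claim.

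The computation is essentially mechanical once the bounds are fixed, so I do not anticipate a genuine obstacle; the only real decision is choosing the triangle-inequality substitutions so that the surviving variables are precisely $x, z, d_1, d_3$. The key choice — and the point most likely to go wrong — is to bound $d_2$ via the detour through $t_3$ rather than through $t_1$ as in Lemma~\ref{Lem:pruning1}, since this is what makes $d_3$ appear in the final additive constant $d_1 + d_3$ while removing $d_2$ from the result.
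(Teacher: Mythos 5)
Your proposal is correct and follows essentially the same route as the paper's proof: the paper likewise bounds $d_2 \leqq x_2 + d_3 + z_2$ via the detour through $t_3$, enlarges the denominator by the discarded terms $x_1 + z_1$ to reach the form $\frac{N}{N + d_1 + d_3}$, and then applies $x_1 + x_2 \geqq x$ and $z_1 + z_2 \geqq z$ to obtain $r \geqq \frac{x + z + d_3}{x + z + d_1 + 2d_3}$ before solving for the claimed inequality. Your identification of the detour through $t_3$ (rather than through $t_1$ as in Lemma~\ref{Lem:pruning1}) as the key choice is exactly the point on which the paper's argument turns.
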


\begin{proof}
We can find a lower bound of the reduction rate as follows:
\begin{align}
r &\geqq \frac{x_1 + x_2 + d_3 + z_2 + z_1}{d_1 + d_2 + d_3}  \quad \textrm{(monotonically decreasing w.r.t. $d_2$)} \nonumber \\
&\geqq \frac{x_1 + x_2 + d_3 + z_2 + z_1}{d_1 + (x_2 + d_3 + z_2) + d_3}  
= \frac{x_1 + x_2 + d_3 + z_2 + z_1}{x_2 + d_1 + 2d_3 + z_2}  \label{ineq-a} \\
&\geqq \frac{x_1 + x_2 + d_3 + z_2 + z_1}{x_1 + x_2 + d_1 + 2d_3 + z_2 + z_1} \quad \textrm{(monotonically increasing w.r.t. $x_1 + x_2, z_2 + z_1$)} \nonumber \\
&\geqq \frac{x + z + d_3}{x + z + d_1 + 2d_3} \nonumber
\end{align}
In the above, to obtain the second inequality, we replace $d_2$ with $(x_2 + d_3 + z_2)$, 
which is an upper bound of $d_2$. 
Similarly, to obtain the last inequality, we replace
$x_1 + x_2$ and $z_2 + z_1$ with their respective lower bounds $x$ and $z$.
From the final inequality $r \geqq \frac{x + z + d_3}{x + z + d_1 + 2d_3}$, we obtain the result\footnote{
By solving the final inequality with respect to $d_3$, we have
\begin{equation*}
\left\{
\begin{array}{ll}
d_3 \leqq - \dfrac{1-r}{1-2r} (x + z) + \frac{r}{1-2r}d_1 & \textrm{if $r \in [\frac{1}{3}, \frac{1}{2})$,} \\[3mm]
d_3 > 0 & \textrm{if $r = \frac{1}{2}$ and $x = d_1$,} \\[1mm]
\textrm{infeasible} & \textrm{if $r = \frac{1}{2}$ and $x \neq d_1$,} \\[1mm]
d_3 \geqq \dfrac{1-r}{2r-1} (x + z) - \frac{r}{2r-1}d_1 & \textrm{if $r \in [\frac{1}{2}, 1)$.} \\[1mm]
\end{array}
\right.
\end{equation*}
}.
\end{proof}

\begin{lem}[pruning for the third for-loop]
A necessary condition for a mixed transport to 
have an reduction rate of $r$ or less is
\begin{equation*}
x_1 + (1 - r) x_2 \leqq rd_1 + (2r - 1) d_3 - (1 - r) z.
\end{equation*}
\label{Lem:pruning3}
\end{lem}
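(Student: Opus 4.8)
The plan is to follow the template of Lemmas~\ref{Lem:pruning1} and~\ref{Lem:pruning2}: start from the hypothesis that the reduction rate is at most $r$, replace the quantities that are not yet fixed when the third for-loop runs (namely $d_2$, $z_1$, $z_2$) by estimates coming from the metric axioms, and finally rearrange to isolate $x_1 + (1-r)x_2$. Writing the hypothesis as
\[
r \;\geq\; \frac{x_1 + x_2 + d_3 + z_2 + z_1}{d_1 + d_2 + d_3},
\]
I would first use, exactly as in Lemma~\ref{Lem:pruning2}, that the right-hand side is monotonically decreasing in $d_2$, together with the triangle-inequality upper bound $d_2 \leq x_2 + d_3 + z_2$ along the route $t_2^{\rm s}\to t_3^{\rm s}\to t_3^{\rm e}\to t_2^{\rm e}$. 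This gives
\[
r \;\geq\; \frac{x_1 + x_2 + d_3 + z_1 + z_2}{x_2 + d_1 + 2d_3 + z_2}.
\]

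The reason for choosing this bound on $d_2$, rather than the bound $d_2 \leq x_1 + d_1 + z_1$ used in Lemma~\ref{Lem:pruning1}, is that the claimed inequality treats $x_1$ and $x_2$ asymmetrically (coefficient $1$ on $x_1$, coefficient $1-r$ on $x_2$). An inequality of the shape $\frac{x_1 + x_2 + \cdots}{x_2 + \cdots}\leq r$ produces exactly $x_1 + (1-r)x_2$ after cross-multiplication, so $x_2$ and not $x_1$ must be the term that migrates into the denominator; the route through $t_3$ is what puts it there.

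The remaining task is to eliminate $z_1$ and $z_2$ in favour of $z$, and this is the step that departs from the earlier lemmas and is the main obstacle. In Lemmas~\ref{Lem:pruning1}–\ref{Lem:pruning2} the denominator was the numerator plus a constant, so replacing a sum such as $z_1 + z_2$ by its lower bound $z$ inside a single fraction was monotone for free. Here the numerator carries $z_1 + z_2$ while the denominator carries only $z_2$, so that move is no longer obviously monotone (its validity would hinge on the sign of $d_1 + d_3 - x_1$). To sidestep this I would cross-multiply first, obtaining
\[
r\,(x_2 + d_1 + 2d_3) - (x_1 + x_2 + d_3) \;\geq\; z_1 + (1-r)z_2,
\]
and then apply the weighted triangle estimate $z_1 + (1-r)z_2 \geq (1-r)(z_1 + z_2) \geq (1-r)z$. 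The first inequality holds because the difference equals $r z_1 \geq 0$, and the second because $z \leq z_1 + z_2$ with $1-r \geq 0$; both use $r \in [\tfrac{1}{3},1)$ crucially, so that $0 \leq 1-r \leq 1$. Substituting this estimate into the cross-multiplied inequality and collecting terms leaves $x_1 + (1-r)x_2$ on one side and yields precisely $x_1 + (1-r)x_2 \leq r d_1 + (2r-1)d_3 - (1-r)z$, as required.
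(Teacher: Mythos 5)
Your proof is correct, and its first half coincides with the paper's: the paper likewise bounds $d_2 \leqq x_2 + d_3 + z_2$ and begins the proof of Lemma~\ref{Lem:pruning3} from the resulting inequality (its inequality~(\ref{ineq-a})), which is exactly the fraction you derive, for exactly the structural reason you give ($x_2$ must migrate into the denominator so that cross-multiplication produces $x_1 + (1-r)x_2$). Where you diverge is the elimination of $z_1, z_2$: the paper stays inside the fraction, first enlarging the denominator by $z_1$ to reach $\frac{x_1+x_2+d_3+z_2+z_1}{x_2+d_1+2d_3+z_2+z_1}$ and then treating $z_2+z_1$ as a common additive term that can be replaced by its lower bound $z$ by monotonicity. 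As you correctly flag, that last step is monotone only when the residual numerator is at most the residual denominator, i.e.\ when $x_1 \leqq d_1 + d_3$, which is not a triangle inequality; the paper leaves this implicit, though it does hold under the lemma's hypothesis, since the preceding inequality forces the fraction below $r < 1$ and hence numerator below denominator. Your alternative finish --- cross-multiplying first and then using $z_1 + (1-r)z_2 \geqq (1-r)(z_1+z_2) \geqq (1-r)z$, with both steps justified by $0 \leqq 1-r \leqq 1$ and the triangle inequality $z \leqq z_1 + z_2$ --- reaches the same final inequality with no sign condition at all, so it is self-contained where the paper's argument needs the implicit observation. The trade-off is uniformity: the paper's version preserves the same monotone-fraction template used across all four pruning lemmas, while yours is a slightly more careful one-off computation that would survive even if the conditional monotonicity argument were unavailable.
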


\begin{proof}
From (\ref{ineq-a}), we can find a lower bound of the reduction rate as follows:
\begin{align*}
r &\geqq \frac{x_1 + x_2 + d_3 + z_2 + z_1}{x_2 + d_1 + 2d_3 + z_2} \\
&\geqq \frac{x_1 + x_2 + d_3 + z_2 + z_1}{x_2 + d_1 + 2d_3 + z_2 + z_1} \quad \textrm{(monotonically increasing w.r.t. $z_2 + z_1$)} \\
&\geqq \frac{x_1 + x_2 + d_3 + z}{x_2 + d_1 + 2d_3 + z}.
\end{align*}
The result follows from the final inequality $r \geqq \frac{x_1 + x_2 + d_3 + z}{x_2 + d_1 + 2d_3 + z}.$.
\end{proof}

\begin{lem}[pruning for the fourth for-loop]
A necessary condition for a mixed transport to 
have an reduction rate of $r$ or less is
\begin{equation*}
d_2 \geqq \frac{1}{r}(x_1 + x_2 + z) - d_1 + \frac{1 - r}{r} d_3.
\end{equation*}
\label{Lem:pruning4}
\end{lem}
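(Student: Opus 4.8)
The plan is to follow the same lower-bounding strategy used in the previous three lemmas, but this time aiming to isolate $d_2$ rather than $d_3$ or the $x$-distances. Starting from the hypothesis that the reduction rate is at most $r$, I would reduce the numerator by replacing $z_2 + z_1$ with its lower bound $z$. Since the triangle inequality among the three endpoints gives $z = d(t_3^{\rm e}, t_1^{\rm e}) \leqq d(t_3^{\rm e}, t_2^{\rm e}) + d(t_2^{\rm e}, t_1^{\rm e}) = z_2 + z_1$, and the reduction rate is monotonically increasing with respect to $z_2 + z_1$ (which appears only in the numerator), this single substitution yields the lower bound
\begin{equation*}
r \geqq \frac{x_1 + x_2 + d_3 + z}{d_1 + d_2 + d_3}.
\end{equation*}

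The remaining step is purely algebraic. I would clear the denominator, which is legitimate because $d_1 + d_2 + d_3 > 0$, to obtain $r(d_1 + d_2 + d_3) \geqq x_1 + x_2 + d_3 + z$. Collecting the $d_2$ term on the left and moving everything else to the right gives $r d_2 \geqq x_1 + x_2 + z - r d_1 + (1 - r) d_3$, and dividing by $r > 0$ (which preserves the direction of the inequality) produces exactly the claimed bound $d_2 \geqq \frac{1}{r}(x_1 + x_2 + z) - d_1 + \frac{1 - r}{r} d_3$.

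I expect this to be the most routine of the four lemmas. Unlike Lemmas~\ref{Lem:pruning2} and~\ref{Lem:pruning3}, there is no case analysis on the sign of $1 - 2r$, because the quantity being isolated sits in the denominator of the reduction rate with strictly positive coefficient $r$, so the final division never flips the inequality. Accordingly there is no genuine obstacle; the only point requiring a moment of care is verifying that the relevant triangle inequality runs in the direction $z_1 + z_2 \geqq z$, so that substituting $z$ truly decreases the fraction. It is worth noting that, in contrast to the earlier lemmas, no upper bound is substituted for $d_2$ itself here, precisely because $d_2$ is the variable we wish to bound from below.
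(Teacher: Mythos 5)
Your proof is correct and matches the paper's own argument exactly: a single substitution of the lower bound $z \leqq z_2 + z_1$ (triangle inequality at the endpoints) followed by solving $r \geqq \frac{x_1 + x_2 + d_3 + z}{d_1 + d_2 + d_3}$ for $d_2$. Your remark on monotonicity is even slightly more careful than the paper, whose annotation reads ``monotonically decreasing'' where the fraction is in fact increasing in $z_2 + z_1$, as you state.
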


\begin{proof}
We can find a lower bound of the reduction rate as follows:
\begin{align*}
r &\geqq \frac{x_1 + x_2 + d_3 + z_2 + z_1}{d_1 + d_2 + d_3}  \quad \textrm{(monotonically decreasing w.r.t. $z_2 + z_1$)} \\
&\geqq \frac{x_1 + x_2 + d_3 + z}{d_1 + d_2 + d_3}. 
\end{align*}
By solving the final inequality $r \geqq \frac{x_1 + x_2 + d_3 + z}{d_1 + d_2 + d_3}$ with respect to $d_2$, we obtain the result.
\end{proof}

\begin{thm}
Algorithm~\ref{Alg:pruning} correctly outputs the set of all mixed transports starting from $t_1^{\rm s}$ with a reduction rate of $r$ or less. 
\end{thm}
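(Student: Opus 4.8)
The plan is to prove the statement in two directions: \emph{soundness} (every triple added to $C$ is a genuine mixed transport with reduction rate at most $r$) and \emph{completeness} (every mixed transport starting from $t_1^{\rm s}$ with reduction rate at most $r$ is indeed added to $C$). Soundness should be immediate: any triple $(t_1, t_2, t_3)$ that reaches the assignment $C \leftarrow C \cup \{(t_1, t_2, t_3)\}$ has passed the innermost \texttt{if}, whose test is the exact reduction-rate inequality $\frac{x_1 + x_2 + d_3 + z_2 + z_1}{d_1 + d_2 + d_3} \leqq r$ also used by the brute-force Algorithm~\ref{Alg:simple}; moreover the set exclusions $t_3 \in T(s) \setminus \{t_1\}$ and $t_2 \in T(s') \setminus \{t_1, t_3\}$ force $t_1, t_2, t_3$ to be pairwise distinct, so the triple is a legitimate mixed transport. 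Hence the whole burden lies in completeness, i.e.\ in showing that the four filtering conditions attached to the for-loops never discard a valid transport.

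The key observation I would record first is that each loop guard is \emph{literally} the necessary condition proved in the matching lemma, once the bound variable is identified with the appropriate endpoint. In the first loop $s$ plays the role of $t_3^{\rm s}$, so $d(t_1^{\rm s}, s) = x$ and $d(s, t_1^{\rm e}) = y$, and the guard is exactly Lemma~\ref{Lem:pruning1}. In the second loop $t_3$ ranges over $T(s)$, so $t_3^{\rm s} = s$, $t_3^{\rm d} = d_3$, and $d(t_3^{\rm e}, t_1^{\rm e}) = z$, making the guard Lemma~\ref{Lem:pruning2}. In the third loop $s'$ plays the role of $t_2^{\rm s}$, giving $d(t_1^{\rm s}, s') = x_1$ and $d(s', t_3^{\rm s}) = x_2$, so the guard is Lemma~\ref{Lem:pruning3}; and in the fourth loop $t_2 \in T(s')$ gives $t_2^{\rm d} = d_2$, so the guard is Lemma~\ref{Lem:pruning4}. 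I would also note that every lane $t$ has $t^{\rm s} \in S$ by definition of $S$, so no candidate for $t_2$ or $t_3$ is ever missed merely because its start base is skipped by the outer loops over $S$.

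With this dictionary in hand, completeness follows by walking a fixed valid transport through the four loops. Fix a mixed transport $(t_1, t_2, t_3)$ with reduction rate at most $r$ and set $s := t_3^{\rm s}$ and $s' := t_2^{\rm s}$. By Lemma~\ref{Lem:pruning1} the base $s$ satisfies the first guard, so the first loop selects it and assigns $x = d(t_1^{\rm s}, s)$ as required. By Lemma~\ref{Lem:pruning2} the lane $t_3 \in T(s)$ then satisfies the second guard and is selected, fixing $d_3$ and $z$. By Lemma~\ref{Lem:pruning3} the base $s' = t_2^{\rm s}$ satisfies the third guard, fixing $x_1$ and $x_2$, and by Lemma~\ref{Lem:pruning4} the lane $t_2 \in T(s')$ satisfies the fourth guard. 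Thus $(t_1, t_2, t_3)$ survives all pruning, reaches the innermost \texttt{if}, and---since its reduction rate is at most $r$---is added to $C$. Combined with soundness, this shows that $C$ equals the set of all desired mixed transports.

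I do not expect a serious obstacle, because the four lemmas have already done the analytic work; the remaining argument is essentially bookkeeping. The one point demanding care is the \emph{exactness} of the loop-guard/lemma correspondence: each guard must be read off as its lemma after substituting $s = t_3^{\rm s}$ and $s' = t_2^{\rm s}$ (so that, e.g., $d(t_1^{\rm s}, s) = x$ and $d(s', t_3^{\rm s}) = x_2$), and one must confirm that at the instant each guard is evaluated the quantities it references ($x$, $d_3$, $z$, $x_1$, $x_2$) have already been assigned their correct values by the enclosing loops. A secondary subtlety worth a sentence is that the lemmas give \emph{necessary} conditions only, so the guards may admit candidates that ultimately fail the final reduction-rate test; this is harmless for correctness, since the innermost \texttt{if} filters them out, and it is precisely why soundness must be argued separately from completeness.
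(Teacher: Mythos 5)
Your proposal is correct and follows the same route as the paper, whose entire proof is the single sentence that the result follows from Lemmas~\ref{Lem:pruning1}--\ref{Lem:pruning4}; you have simply made explicit the soundness/completeness decomposition and the loop-guard-to-lemma dictionary (via $s = t_3^{\rm s}$, $s' = t_2^{\rm s}$) that the authors leave implicit. Your added observations---that every lane's start base lies in $S$, that the set exclusions enforce distinctness, and that necessity-only guards are harmless because the innermost test is exact---are precisely the bookkeeping the paper omits, so no gap remains.
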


\begin{proof}
The result follows from Lemmas~\ref{Lem:pruning1} to \ref{Lem:pruning4}.
\end{proof}

By keeping track of the reduction rate, which is the provisional $k$th rank,  
we can improve Algorithm~\ref{Alg:pruning} to obtain the k-best solutions more instantly. 
To do this, we use a binary heap~\cite{heap}.
The improved algorithm is presented as Algorithm~\ref{Alg:dynamic}.

\begin{algorithm2e}[H]
\SetKwData{Input}{input}
\KwData{a set of lanes $T$ on a metric space $(B, d)$, a lane $t_1$, a desired reduction rate $r \in [\frac{1}{3}, 1)$, and an integer $k$}
\KwResult{the set of $k$ mixed transports starting from $t_1^{\rm s}$ with a reduction rate of $r$ or less, starting with those with the
smallest reduction rate.}
Declare a binary heap (max-heap) $H$\;
$d_1 \leftarrow t_1^{\rm d}$\;
\ForAll{$s \in S$ such that $d(t_1^{\rm s}, s) + d(s, t_1^{\rm e}) \leqq \frac{2r}{1-r}d_1$}{
    $x \leftarrow d(t_1^{\rm s}, s)$\;
    \ForAll{$t_3 \in T(s) \setminus \{t_1\}$ such that $(1 - 2r) t_3^{\rm d} + (1 - r) d(t_3^{\rm e}, t_1^{\rm e}) \leqq (r - 1) x + r d_1$}{
        $d_3 \leftarrow t_3^{\rm d}$\;
        $z \leftarrow d(t_3^{\rm e}, t_1^{\rm e})$\;
        \ForAll{$s' \in S$ such that $d(t_1^{\rm s}, s') + (1 - r) d(s', t_3^{\rm s}) \leqq rd_1 + (2r - 1) d_3 - (1 - r) z$}{
            $x_1 \leftarrow d(t_1^{\rm s}, s')$\;
            $x_2 \leftarrow d(s', t_3^{\rm s})$\;
            \ForAll{$t_2 \in T(s') \setminus \{t_1, t_3\}$ such that $t_2^{\rm d} \geqq \frac{1}{r}(x_1 + x_2 + z) - d_1 + \frac{1 - r}{r} d_3$}{
                $d_2 \leftarrow t_2^{\rm d}$\;
                $z_2 \leftarrow d(t_3^{\rm e}, t_2^{\rm e})$\;
                $z_1 \leftarrow d(t_2^{\rm e}, t_1^{\rm e})$\;
                \If{$\frac{x_1 + x_2 + d_3 + z_2 + z_1}{d_1 + d_2 + d_3} \leqq r$}{
                    \If{$|H| = k$}{
                        Remove the minimum element from $H$\;
                    }
                    \vspace{0mm} Add $(t_1, t_2, t_3)$ to $H$ with priority $\frac{x_1 + x_2 + d_3 + z_2 + z_1}{d_1 + d_2 + d_3} \leqq r$ \;
                    \If{$|H| = k$}{
                        $r \leftarrow$ Refer to the maximum key of $H$ without deleting it\;
                    }
                }
            }
        }
    }
}
\caption{A brute-force top-$k$ search with pruning}
\label{Alg:dynamic}
\end{algorithm2e}
%
\section{Numerical Experiments}
%
In the same way as computational experiments in Kira et al.~\cite{Kira2023}, we use 
approximately 17,000 real, anonymized transport lane data ($|B| = 4828, \;\, |T| = 16957$) 
across Japan, and 
we conduct experiments to enumerate efficient mixed transports. 
First we create 1000 problems (1000 matching requests) 
by randomly selecting 1000 lanes from $T$ and fixing each lane as the first lane. 
These problems are used to compare three algorithms. 
The desired reduction rate $r$ is varied from 0.35 to 0.60 in 0.05 increments. 
We implemented the three algorithms using Cython~\cite{cython, cython-guide} ($\neq$ Python) and executed them on a desktop PC
with an Intel\textregistered~Core\texttrademark~i9-9900K processor and 64GB memory installed. 
The results are presented in Tables~\ref{Tbl:result1}.

\begin{table}[h]
\centering
\caption{Computational time to process 1000 matching requests (seconds)} 
\label{Tbl:result1}
\begin{tabular}{|c|r|r|r|} \hline
\multirow{2}{*}{Scenario \textbackslash Algorithm}  & Brute-force search & pruning & $k$ Best solutions \\
& Algorithm~\ref{Alg:simple}  & Algorithm~\ref{Alg:pruning}  & Algorithm~\ref{Alg:dynamic}  \\ \hline
$r = 0.60$ & 124230.3 & 563.2 & 17.1  \\ 
$r = 0.55$ & (34h 31min) & 296.0 & 12.9  \\ 
$r = 0.50$ & $\downarrow$  \quad & 85.5 & 10.1  \\ 
$r = 0.45$ & $\downarrow$ \quad & 23.8 & 5.8  \\ 
$r = 0.40$ & $\downarrow$ \quad & 5.1 & 2.4  \\ 
$r = 0.35$ & $\downarrow$ \quad & 0.4 & 0.3  \\ \hline
\end{tabular}
\end{table}

In the case of $r = 0.60$, 
Algorithm~\ref{Alg:dynamic} is more than 7,000 times faster than simple brute force. 
In the case of $r = 0.35$, Algorithm~\ref{Alg:dynamic} is more than 400,000 times faster than simple brute force. 
%
\section{Summary}
%
In this study, we have focused on a form of joint transportation called mixed transportation 
and have proposed an algorithm for instantly enumerating set of all combinations of transport lanes with 
a reduction rate is a desired value of less. 
The results demonstrated that the proposed method is 7,000 times faster than simple brute force. 
%
\section*{Acknowledgments}
%
This research and development was carried out as the NEDO-funded project,  titled \lq\lq 
Cross-industry joint transportation matching service to realize white logistics.'' 
In addition, Akifumi Kira was supported in part by JSPS KAKENHI Grant Numbers 17K12644, 21K11766, and 	23H01030, Japan. 



\end{document}